\newtheorem*{thrm}{Theorem}
\newcommand{\kk}{\mathbf{k}}
\begin{document}
\date{\today}
\author{Nikolaos K. Kollas}
\email{kollas@upatras.gr}
\affiliation{Division of Theoretical and Mathematical Physics, Astronomy and Astrophysics, Department of Physics, University of Patras, 26504,  Patras, Greece}
\author{Dimitris Moustos}
\email{dmoustos@upatras.gr}
\affiliation{Division of Theoretical and Mathematical Physics, Astronomy and Astrophysics, Department of Physics, University of Patras, 26504,  Patras, Greece}
\title{Assisted harvesting and catalysis of coherence from scalar fields}
\begin{abstract}
    Recently it has been demonstrated that it is possible to harvest quantum resources other than entanglement from a coherent scalar field. Employing time-dependent perturbation theory, we present a complete analysis of the conditions under which a spatially extended Unruh-DeWitt detector coupled to the proper time derivative of the field can harvest coherence for any initial state of the field, as well as the energy cost that is required for
    each harvest. By studying harvesting under repeatable extractions it is proven that when the detector interacts with the field through a delta coupling coherence is catalytic. For a Gaussian smeared detector it is shown that harvesting from a coherent field depends on the phase of its amplitude distribution and its initial energy as well as on the mean radius of the detector and the mean interaction duration between the two. For a detector moving at a constant velocity and with a mean radius of the same order as its transition wavelength, we observe that, for relativistic speeds, coherence swelling effects are present the intensity of which depends on the dimension of the underlying Minkowski spacetime.
\end{abstract}
\maketitle
\section{Introduction}
Superposition is one of the most striking phenomena which distinguishes quantum from classical physics. The degree to which a system is superposed between different orthogonal states is known as \emph{coherence} \cite{RevModPhys.89.041003,aberg2006quantifying,PhysRevLett.119.230401}. Much like entanglement \cite{RevModPhys.81.865}, coherence is considered to be a valuable resource in quantum information processes. In Quantum computing \cite{nielsen2002quantum,Preskill2018quantumcomputingin}, where information is encoded in the states of two-level systems, algorithms designed to operate in superposition, are exponentially faster than their classical counterparts \cite{10.1137/S0097539795293172,PhysRevLett.79.325,Arute2019}. Coherence is so central to the development of a universal quantum computer that it is used as a metric for the quality of a quantum processor. The time that it takes for a qubit to effectively decohere due to noise is known as the \emph{dephasing time} with current processors achieving times of a few hundred microseconds. Coherent phenomena are important in other fields of research, such as quantum metrology \cite{PhysRevA.94.052324} and thermodynamics \cite{Lostaglio2015,PhysRevLett.115.210403,PhysRevX.5.021001,Narasimhachar2015,Korzekwa_2016} for example. Surprisingly it has been suggested that these phenomena might also be present in biological processes and more specifically in the efficiency of energy transport during photosynthesis \cite{Lloyd_2011}.

A simple method of obtaining coherence is by extracting it from another system. When this process involves a quantum field as the source then it is known as a \emph{coherence harvesting protocol}. Despite an extensive amount of research on entanglement harvesting protocols (see, e.g., \cite{VALENTINI1991321,Reznik2003,PhysRevA.71.042104,Salton_2015,PhysRevD.92.064042,PhysRevD.96.025020,PhysRevD.96.065008, PhysRevD.98.085007,Cong:2020nec,Tjoa:2020eqh}) and the deep connection that exists between entanglement and coherence \cite{PhysRevLett.115.020403,PhysRevLett.117.020402,PhysRevA.96.032316}, coherent harvesting has not received any attention. By employing the Unruh-DeWitt (UDW) particle detector model \cite{Unruh,DeWitt,birrell}, it was shown recently that a two-level pointlike detector, initially in its ground energy state, interacting with a coherent massless scalar field in $1+1$ flat spacetime, can harvest a small amount of coherence \cite{KBM}. As it turns out, this amount depends on the initial energy of the field, the mean interaction duration and the detector's state of motion. For a detector moving at relativistic speeds and initial field energies lower than the gap between its energy levels, it is possible to extract a larger amount of coherence than when it is static, a phenomenon dubbed by the authors as ``swelling''.  

In this article, we provide a thorough study of the conditions under which coherence harvesting is possible for any initial state of the field in $n+1$ dimensional Minkowski spacetime. In order to achieve this and to avoid the problem of IR divergences that are present in the $1+1$ dimensional case of a linear coupling between detector and field \cite{BJA}, we instead consider an interaction in which the former is coupled to the proper time derivative of the latter. Both models contain all the essential features of matter interacting with radiation \cite{Wavepacket:det,CHLI2}, so they provide a useful benchmark for studying possible applications of relativistic effects in quantum information processing. Acknowledging the fact that a pointlike detector is not a physical system--an atom or an elementary particle, for example, has finite size-- and to make our results as relevant as possible we will take into consideration the spatial extension of the detector. 

We show that when the interaction is instantaneous harvesting is catalytic \cite{PhysRevLett.113.150402}. At the cost of some energy, which assists in the extraction process, it is possible to repeatedly extract the same amount of coherence each time. For an inertial detector moving at a constant velocity and under suitable conditions, it is proven that this is also the maximum amount that can be obtained. As an example we consider the case of  harvesting coherence from a coherent scalar field and find that the process depends on the phase of its coherent amplitude distribution, its initial energy, the mean radius of the detector and the mean interaction duration between the two. For a mean radius comparable to the inverse of its transition frequency, it is shown that although the amount of coherence extracted is of the same order as the coupling constant the process can be repeated to obtain a single unit of coherence in a very short time. We conclude that even in the case of a spatially extended detector swelling effects are still present but these are weaker in a $3+1$ compared to a $1+1$ dimensional spacetime.

\section{Quantum coherence}\label{Qcoh}
From a physical point of view coherence reflects the degree of superposition that a quantum system exhibits when it simultaneously occupies different orthogonal eigenstates of an observable of interest \cite{PhysRevLett.119.230401}. Coherent systems are considered to be valuable resources in quantum information processes, because with their help it is possible, at the cost of consuming some of the coherence that they contain, to simulate transformations that violate conservation laws associated with the corresponding observable.

Mathematically, let $\{\ket{i}\}$ denote a set of basis states spanning a finite discrete Hilbert space $\mathcal{H}$, which correspond to the eigenstates of an observable $\hat{O}$. Any state $\rho$ which is diagonal in this basis
\begin{equation}\label{definition}
    \rho=\sum_ip_i\ketbra{i}
\end{equation}
is called \emph{incoherent} and commutes with the observable. If $\rho$ contains non-diagonal elements then it is called \emph{coherent} \cite{RevModPhys.89.041003}. In this case $[\rho,\hat{O}]\neq 0$ \footnote{assuming that the spectrum of $\hat{O}$ is non-degenerate}, and the state changes under the action of the one parameter group of symmetry transformations $U(s)=\text{exp}(-is\hat{O})$ generated by the observable. This makes coherent systems useful as reference frames and reservoirs for the implementation of non-symmetric transformations \cite{RevModPhys.79.555,Gour_2008,Marvian2014,PhysRevA.90.062110}. For example, for a fixed Hamiltonian $\hat{H}$, any system that possesses coherence with respect to the energy basis can be used as a clock since in this case its rate of change is non-zero, $\dot\rho(t)\neq0$, so it necessarily changes with the passage of time. The same system could alternatively be utilised as a coherent energy reservoir with the help of which it is possible to perform incoherent transformations on other systems \cite{PhysRevLett.113.150402}.

The amount of coherence present in a system can be quantified with the help of a \emph{coherence measure}. This is a real valued function $C(\cdot)$ on the set of density matrices $\mathcal{D}$ such that
\begin{equation}
    C(\rho)\geq 0,\quad \forall\rho\in\mathcal{D}
\end{equation}
with equality if and only if $\rho$ is incoherent. A simple example of such a function is given by the $\ell_1$-norm of coherence \cite{RevModPhys.89.041003}, which is equal to the sum of the modulus of the system's non-diagonal elements
\begin{equation}\label{coh_measure}
    C(\rho)=\sum_{i\neq j}\abs{\rho_{ij}}
\end{equation}
with values ranging between $0$ for an incoherent state and $d-1$ for the maximally coherent $d$-dimensional pure state
\begin{equation}
    \ket{\psi}=\frac{1}{\sqrt{d}}\sum_{i=0}^{d-1}\ket{i}.
\end{equation}

In order to extract coherence from a coherent system $\sigma$ to an incoherent system $\rho$ it is necessary to bring the two in contact and make them interact through a completely positive and trace preserving quantum operation. When the latter obeys the conservation law associated with the observable and is strictly incoherent (in the sense that it maps incoherent states to incoherent states) the process is called \emph{faithful} \cite{PhysRevA.101.042325}. When this is no longer the case the operation generates extra coherence, which increases the amount stored in the combined system and can assist in the extraction process \cite{PhysRevA.92.032331,BU20171670}, in much the same way that a quantum operation which is non-local can create entanglement between two spacelike separated systems. 

We shall now demonstrate how to construct such an assisted protocol for harvesting coherence onto an UDW detector from a scalar field. In what follows we shall assume a flat $n+1$ dimensional spacetime with metric signature $(-+\cdots+)$. We will denote spacetime vectors by sans-serif characters, and the scalar product of vectors $\mathsf{x}$ and $\mathsf{y}$ as $\mathsf{x}\cdot\mathsf{y}$. Boldface letters represent spatial n-vectors. Throughout, we make use of natural units in which $\hbar=c=1$ and employ the interaction picture for operators and states.
\section{Unruh-DeWitt detector model}\label{UDW}
To study the amount of coherence harvested from a massless scalar field we will employ an UDW detector coupled to the proper time derivative of the field \cite{Hinton,Takagi,DM}. In the simplest case considered here, the latter is modeled as a qubit with two energy levels, ground $\ket{g}$ and excited $\ket{e}$ and energy gap equal to $\Omega$, with Hamiltonian
\begin{equation}
    \hat{H}_\text{\tiny\ensuremath D}=\frac\Omega2(\ketbra{e}-\ketbra{g})
\end{equation}
which is moving along a worldline  $\mathsf{x}(\tau)$ parametrized by its proper time $\tau$. The detector is interacting with a massless  scalar field in $n+1$ dimensions 
\begin{equation}\label{field}
    \hat\phi(\mathsf{x})=\int\frac{d^n\kk}{\sqrt{(2\pi)^n2\abs{\kk}}}\left[\hat a_{\kk}e^{i\mathsf{k}\cdot\mathsf{x}}+\text{H.c.}\right],
\end{equation}
with a normal-ordered Hamiltonian of the form
\begin{equation}
    \hat{H}_\phi=\int \abs{\kk}\hat a^\dagger_{\kk}\hat a^{}_{\kk}d^n\kk,
\end{equation}
where $\hat a_{\kk}$, and $\hat a^\dagger_{\kk}$ are the creation and annihilation operators of the mode with momentum $\kk$ that satisfy the canonical commutation relations
\begin{equation}\label{commutation}
    [\hat a_{\kk},\hat a_{\mathbf{k'}}]=[\hat a^\dagger_{\kk},\hat a^\dagger_{\mathbf{k'}}]=0,\quad  [\hat a_{\kk},\hat a^\dagger_{\mathbf{k'}}]=\delta(\kk-\mathbf{k'}).
\end{equation}

The interaction between detector and field is constructed by coupling the former's monopole moment operator
\begin{equation}\label{dipole}
    \hat{\mu}(\tau)=e^{i\Omega\tau}\ketbra{e}{g}+ e^{-i\Omega\tau}\ketbra{g}{e},
\end{equation}
to the momentum degrees of freedom of the latter through the following interaction Hamiltonian
\begin{equation}\label{interaction}
    \hat{H}_{\text{int}}(\tau)=\lambda\chi(\tau)\hat{\mu}(\tau)\otimes\partial_\tau\hat{\phi}_f(\mathsf{x}(\tau)).
\end{equation}
 Here $\lambda$ is  a coupling constant with dimensions $(\mbox{length})^{\frac{n+1}{2}}$, $\chi(\tau)$ is a real valued \emph{switching function} that describes the way the interaction is switched on and off; and $\hat{\phi}_f(\mathsf{x}(\tau))$ is a smeared field on the detector's center of mass worldline  $\mathsf{x}(\tau)=(t(\tau),\mathbf{x}(\tau))$,
\begin{equation}\label{smeared}
    \hat{\phi}_f(\mathsf{x}(\tau))=\int_{\mathcal{S}(\tau)} f(\boldsymbol{\xi})\hat{\phi}(\mathsf{x}(\tau,\boldsymbol{\xi}))d^n\boldsymbol\xi,
\end{equation}
where 
\begin{equation}\label{Fermi-Walker}
    \mathsf{x}(\tau,\boldsymbol\xi)=\mathsf{x}(\tau)+\boldsymbol\xi
\end{equation}
are the Fermi-Walker coordinates \cite{MTW} on the simultaneity hyperplane $S(\tau)$, which is defined by all those space-like vectors $\boldsymbol\xi$ normal to the detector's four-velocity, $\mathcal{S}(\tau)=\left\{\boldsymbol\xi|\mathsf u\cdot\boldsymbol\xi=0\right\}$ (see Fig. \ref{fig:Fermi-Walker}).
\begin{figure}
    \centering
    \includegraphics[width=\columnwidth]{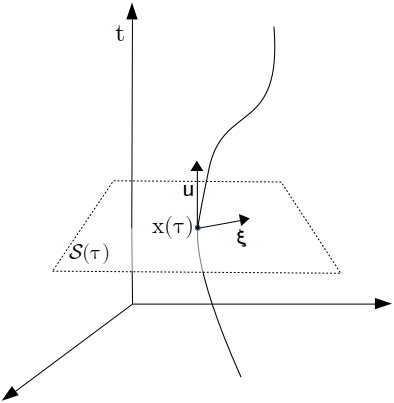}
    \caption{Any point in the neighbourhood of the detector's worldline can be described by its Fermi-Walker coordinates $(\tau,\boldsymbol\xi)$, where the proper time $\tau$ indicates its position along the trajectory and $\boldsymbol\xi$ is the displacement vector from this point lying on the simultaneity hyperplane consisting of all those space-like vectors normal to its four-velocity $\mathsf{u}$.}\label{fig:Fermi-Walker}
\end{figure} The real valued function $f(\boldsymbol\xi)$ in Eq. (\ref{smeared}) is known as the \emph{smearing function} and is a physical reflection of the finite size and shape of the detector \cite{Schlicht_2004,Louko_2006,Wavepacket:det,CHLI2}.

Compared to the usual UDW interaction in which the detector is linearly coupled to the field, the derivative coupling is free of the issue of IR divergences in the $1+1$ dimensional case which arise due to the massless nature of the field \cite{BJA}. The Hamiltonian in Eq. \eqref{interaction} resembles closely the dipole interaction between an atom with dipole moment $\mathbf{d}$ and an external electromagnetic field, since in this case the electric field operator is defined, in the Coulomb gauge, by means of the vector potential $\hat{\mathbf{A}}(t,\boldsymbol{x})$ as $\hat{\mathbf{E}}(t,\boldsymbol{x})=-\partial_t \hat{\mathbf{A}}(t,\boldsymbol{x})$ \cite{Scully}.

Combining Εq. (\ref{field}) with Εqs. (\ref{smeared})-(\ref{Fermi-Walker}) the smeared field operator reads
\begin{equation}\label{mod_smeared}
    \hat{\phi}_f(\mathsf{x}(\tau))=\int\frac{d^n\kk}{\sqrt{(2\pi)^n2\abs{\kk}}}\left[F(\mathsf k,\tau)\hat a_{\kk}e^{i\mathsf{k}\cdot\mathsf{x}(\tau)}+\text{H.c.}\right],
\end{equation}
where
\begin{equation}\label{Fourier}
    F(\mathsf k,\tau)=\int_{\mathcal{S}(\tau)} f(\boldsymbol\xi)e^{i\mathsf k\cdot\boldsymbol\xi}d^n\boldsymbol\xi
\end{equation}
is the Fourier transform of the smearing function. Now $\mathsf k$ can always be decomposed as
\begin{equation}
    \mathsf k=(\mathsf k\cdot\mathsf u)\mathsf u+(\mathsf k\cdot{\mathsf {\boldsymbol\zeta}}){\mathsf {\boldsymbol\zeta}}
\end{equation} 
for some unit vector ${\mathsf{\boldsymbol\zeta}}\in \mathcal{S}(\tau)$. Since for a massless scalar field $\mathsf k$ is light-like, it follows that $(\mathsf k\cdot \mathsf u)^2=(\mathsf k\cdot{\mathsf {\boldsymbol\zeta}})^2$. This means that for a spherically symmetric smearing function the Fourier transform in Eq. (\ref{Fourier}) is real and depends only on $\abs{\mathsf{k}\cdot\mathsf{u}}$,
\begin{equation}\label{const-smearing}
    F(\mathsf{k},\tau)=F(\abs{\mathsf{k}\cdot\mathsf{u}}).
\end{equation}
\section{Assisted harvesting and catalysis of quantum coherence}\label{sec:harvesting}
Suppose now that before the interaction is switched on at a time $\tau_\text{on}$, the combined system of detector and field starts out in a separable state of the form
\begin{equation}\label{separable}
    \ketbra{g}\otimes\sigma_\phi,
\end{equation}
where the detector occupies its lowest energy level and the field is in a state $\sigma_\phi$. The final state of the system after a time $\tau_\text{off}$ at which the interaction is switched off, can be obtained by evolving Eq. (\ref{separable}) with the unitary operator 
\begin{equation}\label{evolution}
    \hat{U}=\mathcal{T}\text{exp}\left(-i\int\limits_{\tau_\text{on}}^{\tau_\text{off}}\hat{H}_{\text{int}}(\tau)d\tau\right),
\end{equation}
where $\mathcal{T}$ denotes time ordering. Assuming that the switching function has a compact support we can extend the limits over $\pm\infty$. Setting 
\begin{equation}\label{Phi}
    \hat\Phi=\int\limits_{-\infty}^{+\infty}\chi(\tau)e^{-i\Omega\tau}\partial_\tau\hat\phi_f(\mathsf{x}(\tau))d\tau,
\end{equation}
Eq. (\ref{evolution}) can then be rewritten as
\begin{equation}\label{eq-evolution}
    \hat{U}=\text{exp}\left[-i\lambda(\ketbra{e}{g}\otimes\hat\Phi^\dagger+\ketbra{g}{e}\otimes\hat\Phi)\right].
\end{equation}
Tracing out the field degrees of freedom, one can obtain the state of the detector after the interaction which in this case is equal to
\begin{equation}\label{det-state}
    \rho_\text{\tiny\ensuremath D}=\left(\begin{array}{cc}
         1-\lambda^2\tr^{}(\hat\Phi^\dagger\sigma_\phi\hat\Phi)& i\lambda \tr^{}(\hat\Phi\sigma_\phi) \\
         -i\lambda \tr^{}(\hat\Phi^\dagger\sigma_\phi)& \lambda^2\tr^{}(\hat\Phi^\dagger\sigma_\phi\hat\Phi)
    \end{array}\right)+\mathcal{O}(\lambda^3).
\end{equation}
In a similar fashion, by taking the partial trace over the detector's Hilbert space, we can obtain the state of the field after harvesting,
\begin{equation}\label{field-state}
    \sigma_\phi'=\sigma_\phi+\lambda^2\left(\hat\Phi^\dagger\sigma_\phi\hat\Phi-\frac12\left\{\hat\Phi\hat\Phi^\dagger,\sigma_\phi\right\}\right)+\mathcal{O}(\lambda^4).
\end{equation}

With the help of Eqs. (\ref{coh_measure}) and (\ref{det-state}) the amount of coherence harvested to the detector to lowest order in the coupling constant is equal to
\begin{equation}\label{coher}
    C=2\lambda\abs{\tr^{}(\hat\Phi\sigma_\phi)}.
\end{equation}
Defining
\begin{equation}
    \mathcal{F_\pm}(\kk)=\int\limits_{-\infty}^{+\infty}\chi(\tau)e^{\pm i\Omega\tau}\partial_\tau \left(F(\mathsf k,\tau)e^{i\mathsf k\cdot\mathsf x(\tau)}\right)d\tau,
\end{equation}
Eq. (\ref{coher}) can be written as
\begin{equation}\label{explicit-coher}
    C=2\lambda\abs{\int\frac{d^n\kk}{\sqrt{(2\pi)^n2\abs{\kk}}}\Big(\mathcal{F}_-(\kk)a(\kk)+\mathcal{F}^*_+(\kk)a^*(\kk)\Big)},
\end{equation}
where 
\begin{equation}\label{ampl-distr}
    a(\kk)=\tr^{}(\hat a_{\kk}\sigma_\phi)
\end{equation} is the \emph{coherent amplitude distribution} of the field.

Suppose that we wish to repeat the process and extract coherence onto a fresh detector copy. It is straightforward to see that for the $m$-th harvest one can extract an amount of
\begin{equation}\label{m-coher}
    C^{(m)}=2\lambda\abs{\tr^{}(\hat\Phi\sigma_\phi^{(m)})}
\end{equation}
units of coherence from a perturbed field in the state
\begin{equation}\label{m-field-state}
    \sigma_\phi^{(m)}=\sigma_\phi^{(m-1)}+\lambda^2\left(\hat\Phi^\dagger\sigma_\phi^{(m-1)}\hat\Phi-\frac12\left\{\hat\Phi\hat\Phi^\dagger,\sigma_\phi^{(m-1)}\right\}\right).
\end{equation}
By combining Eqs. (\ref{m-coher}) and (\ref{m-field-state}) and exploiting the cyclic property of the trace as well as the fact that $[\hat\Phi,\hat\Phi^\dagger]$ is a $c$-number (for proof see Appendix \ref{appendix-useful}) it follows that 
\begin{equation}\label{repeat-coh}
    C^{(m+1)}=C^{(m)}\abs{1+\frac{\lambda^2}{2}\left[\hat\Phi,\hat\Phi^\dagger\right]},
\end{equation}
so to lowest order in the coupling constant the amount of coherence harvested each time remains the same.
\begin{figure}
    \centering
    \includegraphics[width=\columnwidth]{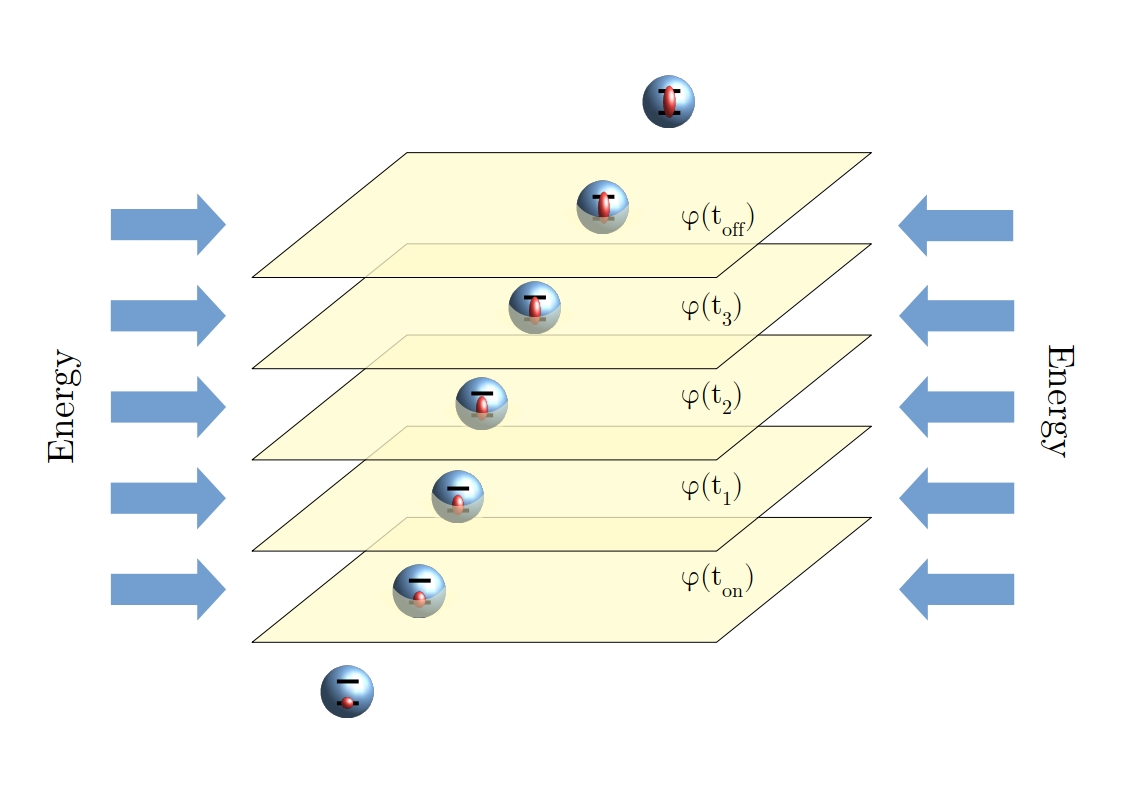}
    \caption{\textbf{Assisted harvesting of quantum coherence.} A moving two-level system, initially in its ground state at some time $t<t_\text{on}$, interacts with a massless scalar field through a derivative coupling. The process requires an external flow of  energy which assists harvesting by increasing the combined system's coherence. After the interaction is switched off at a time $t_\text{off}$ the detector will find itself in a superposition between its energy levels.}
    \label{fig:harvesting}
\end{figure}

Let's focus our attention on normalised smearing and switching functions such that
\begin{equation}\label{fixing}
    \int\limits_{-\infty}^{+\infty}\chi(\tau)d\tau=\int_{\mathcal S(\tau)}f(\boldsymbol\xi)d^n\boldsymbol\xi=1,
\end{equation}
and define
\begin{equation}
    R=\int_{\mathcal S(\tau)}\abs{\xi}f(\boldsymbol\xi)d^n\boldsymbol\xi
\end{equation}
as the mean radius of the detector and
\begin{equation}
    T=\int\limits_{-\infty}^{+\infty}\abs{\tau}\chi(\tau)d\tau
\end{equation}
as the mean interaction duration respectively. This will make it easier to compare different setups and will allow the study, in a unified way, of the effects that different sizes and finite interaction durations have on harvesting as well as the limiting case of an instantaneous interaction in which $\chi(\tau)=\delta(\tau)$. In this limit, $[\hat\Phi,\hat\Phi^\dagger]=0$ and the amount harvested each time is exactly the same to any order (for more details see Appendix \ref{second-appendix}). It seems that when the detector interacts with the field through a delta coupling, coherence harvesting is catalytic \cite{PhysRevLett.113.150402}. Even though in principle this is allowed for infinite dimensional systems that act as coherence reservoirs \cite{PhysRevLett.123.020403,PhysRevLett.123.020404}, it is not certain if this is the case here. Since the interaction Hamiltonian does not commute with the unperturbed part, $\hat H_\text{\tiny\ensuremath D}+\hat H_\phi$, of the total Hamiltonian, the process requires an outside supply of positive energy $\Delta E$ each time \cite{Hackl2019minimalenergycostof,Bény2018}. Energy non-conserving unitaries like the one in Eq. (\ref{evolution}) can increase the coherence of the combined system assisting in the extraction process \cite{PhysRevA.92.032331,BU20171670} (see Fig. \ref{fig:harvesting}). Nonetheless a necessary condition for extracting a non trivial amount of coherence is for the field to be in a state with a non-zero coherent amplitude distribution.
\section{Inertial detectors}\label{Sec:Inert}
We will now consider an inertial detector which is moving along a worldline  with a constant velocity $\boldsymbol\upsilon$, and whose center of mass coordinates is given by
\begin{equation}\label{coordinates}
    \mathsf{x}(\tau)=\mathsf{u}\tau,
\end{equation}
where $\mathsf{u}=\gamma(1,\boldsymbol{\upsilon})$ is its four-velocity, with $\gamma=1/\sqrt{1-\upsilon^2}$ the Lorentz factor. For a spherically symmetric smearing function with a positive Fourier transform, it can be proven that
\begin{thrm}
For a suitable choice of the coherent amplitude distribution's phase the maximum amount of harvested coherence to lowest order, is obtained by a detector interacting instantaneously with the field.
\end{thrm}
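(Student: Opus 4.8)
\emph{Proof proposal.} The strategy is to perform explicitly, for the inertial worldline, the $\tau$-integral defining $\mathcal F_\pm$, so that the dependence of the harvested coherence on the switching collapses into a single Fourier factor, and then bound that factor by $1$ uniformly in momentum.

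First I would use $\mathsf x(\tau)=\mathsf u\tau$, which gives $\mathsf k\cdot\mathsf x(\tau)=(\mathsf k\cdot\mathsf u)\tau$, together with the fact that for a spherically symmetric smearing profile with positive Fourier transform Eq.~(\ref{const-smearing}) makes $F(\mathsf k,\tau)=F(|\mathsf k\cdot\mathsf u|)>0$ independent of $\tau$ along an inertial trajectory (where $\mathsf u$ is constant). Hence $\partial_\tau\big(F(\mathsf k,\tau)e^{i\mathsf k\cdot\mathsf x(\tau)}\big)=i(\mathsf k\cdot\mathsf u)F(|\mathsf k\cdot\mathsf u|)e^{i(\mathsf k\cdot\mathsf u)\tau}$, the $\tau$-integrals in $\mathcal F_\pm$ reduce to the Fourier transform $\tilde\chi(\omega)\equiv\int_{-\infty}^{+\infty}\chi(\tau)e^{i\omega\tau}d\tau$ of the switching function, and
\begin{equation*}
\mathcal F_\pm(\kk)=i\,(\mathsf k\cdot\mathsf u)\,F(|\mathsf k\cdot\mathsf u|)\,\tilde\chi(\pm\Omega+\mathsf k\cdot\mathsf u).
\end{equation*}
Substituting into Eq.~(\ref{explicit-coher}) and introducing the real weight $w(\kk)=(\mathsf k\cdot\mathsf u)F(|\mathsf k\cdot\mathsf u|)/\sqrt{(2\pi)^n2|\kk|}$ (pulling out the unit-modulus factor $i$), this becomes
\begin{equation*}
C=2\lambda\left|\int d^n\kk\;w(\kk)\Big(\tilde\chi(\mathsf k\cdot\mathsf u-\Omega)\,a(\kk)-\overline{\tilde\chi(\mathsf k\cdot\mathsf u+\Omega)}\,a^*(\kk)\Big)\right|.
\end{equation*}

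Next I would invoke that $\chi$ is non-negative and normalised, $\int\chi\,d\tau=1$ — the very conditions that make $T=\int|\tau|\chi\,d\tau$ a genuine mean — so $\chi$ is a probability density and its characteristic function satisfies $|\tilde\chi(\omega)|\le\tilde\chi(0)=1$ for all real $\omega$. The triangle inequality then yields the switching-independent bound
\begin{equation*}
C\le 2\lambda\int d^n\kk\;|w(\kk)|\big(|\tilde\chi(\mathsf k\cdot\mathsf u-\Omega)|+|\tilde\chi(\mathsf k\cdot\mathsf u+\Omega)|\big)|a(\kk)|\le 4\lambda\int d^n\kk\;|w(\kk)|\,|a(\kk)|.
\end{equation*}
Finally I would check that the instantaneous coupling $\chi(\tau)=\delta(\tau)$ saturates this bound for a suitable phase of $a(\kk)$: there $\tilde\chi\equiv1$, so $C_{\mathrm{inst}}=4\lambda\big|\int d^n\kk\,w(\kk)\,\text{Im}\,a(\kk)\big|$, and since $\mathsf k\cdot\mathsf u<0$ for a timelike four-velocity and a null $\mathsf k$ while $F>0$ by hypothesis, $w$ has one sign throughout; taking the phase of the coherent amplitude distribution to be the constant $\pm\pi/2$ (i.e.\ $a(\kk)$ purely imaginary) makes $w(\kk)\,\text{Im}\,a(\kk)$ sign-definite and gives $C_{\mathrm{inst}}=4\lambda\int d^n\kk\,|w(\kk)||a(\kk)|$, exactly the upper bound. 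Thus for this choice of phase no finite-duration switching can beat the delta coupling, which is the claim.

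I expect the only delicate point to be the step $|\tilde\chi(\omega)|\le1$: it needs the positivity $\chi\ge0$ (without it one would restrict to the class of positive switchings, or replace the bound $1$ by $\|\chi\|_1$), and one should note that the momentum integrals converge, which follows from the decay of $F$ as the Fourier transform of a localised smearing profile and from assuming the field state obeys $\int|w||a|<\infty$. A secondary remark is that the optimising phase is physically realisable — a coherent state with purely imaginary amplitude $a(\kk)$ is a legitimate field state — which is what ``a suitable choice'' in the statement refers to.
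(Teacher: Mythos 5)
Your proposal is correct and follows essentially the same route as the paper: evaluate $\mathcal F_\pm$ explicitly on the inertial worldline so the switching enters only through its Fourier transform, bound that factor by its value at zero frequency using the normalization (and, as you rightly note, the non-negativity) of $\chi$, apply the triangle inequality, and check that $\chi(\tau)=\delta(\tau)$ with a purely imaginary, sign-definite amplitude distribution saturates the bound. Your explicit verification of the saturation step and your remark that $|\tilde\chi(\omega)|\le 1$ genuinely requires $\chi\ge 0$ are slightly more careful than the paper's own wording, but the argument is the same.
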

\begin{proof}
Taking the absolute value inside the integral in Eq. (\ref{explicit-coher}) we find that
\begin{equation}
    C\leq2\lambda{\int\frac{d^n\kk}{\sqrt{(2\pi)^n2\abs{\kk}}}\abs{a(\kk)}\left(\abs{\mathcal{F}_-(\kk)}+\abs{\mathcal{F}_+(\kk)}\right)}.
\end{equation}
For a detector moving with a constant velocity the Fourier transform of the smearing function no longer depends on its proper time, in this case
\begin{equation}
    \mathcal{F}_-(\kk)=i(\mathsf k\cdot\mathsf u)F(\abs{\mathsf k\cdot\mathsf u})X^*(\Omega-\mathsf k\cdot\mathsf u)
\end{equation}
and
\begin{equation}
    \mathcal{F}_+(\kk)=i(\mathsf k\cdot\mathsf u)F(\abs{\mathsf k\cdot\mathsf u})X(\Omega+\mathsf k\cdot\mathsf u)
\end{equation}
where
\begin{equation}\label{Xi}
    X(\Omega\pm\mathsf k\cdot\mathsf u)=\int\limits_{-\infty}^{+\infty}\chi(\tau)e^{i(\Omega\pm\mathsf k\cdot\mathsf u)\tau}d\tau.
\end{equation}
Because of the normalization property in Eq. (\ref{fixing}), $\abs{X(\Omega\pm\mathsf k\cdot\mathsf u)}\leq 1$ so finally
\begin{equation}\label{inequality}
    C\leq4\lambda\int\frac{(-\mathsf k\cdot\mathsf u)}{\sqrt{(2\pi)^n2\abs{\kk}}}F(\abs{\mathsf k\cdot\mathsf u})\abs{a(\kk)}d^n\kk,
\end{equation}
where equality holds for $\chi(\tau)=\delta(\tau)$ and a coherent amplitude distribution with phase, $\text{arg}(a(\kk))=\frac{\pi}{2}$ \footnote{In the Unruh-DeWitt interaction where the factor $(-\mathsf k\cdot\mathsf u)$ in the numerator is absent, the Theorem holds for an arbitrary motion of the detector as long as $\mathsf{x}(0)=0$.}.

Note that if the Fourier transform of the smearing function is not positive then Eq. (\ref{inequality}) is only an upper bound on the amount of harvested coherence.
\end{proof}

If the amplitude distribution is also spherically symmetric then
\begin{multline}\label{moving-coher}
   C=2\lambda\left|\int\frac{(-\mathsf k\cdot\mathsf u)F(\abs{\mathsf k\cdot\mathsf u})}{\sqrt{(2\pi)^n2\abs{\kk}}}\left[a(\abs{\kk})X^*(\Omega-\mathsf k\cdot\mathsf u)\right.\right.\\\left.-{a}^*(\abs{\kk})X(\Omega+\mathsf k\cdot\mathsf u)\right]d^n\kk\Bigg|,  
\end{multline}
which for a static detector reduces to
\begin{multline}\label{static-coher}
   C=\frac{2\lambda s_n}{\sqrt{2(2\pi)^n}}\left|\int_0^{\infty}k^{n-\frac12}F(k)\left[a(k)X^*(\Omega+k)\right.\right.\\\left.-a^*(k)X(\Omega-k)\right]dk\Bigg|, \end{multline}
where $s_n=\frac{2\pi^{n/2}}{\Gamma(n/2)}$ is the surface area of the unit $n$-sphere. 
By boosting the four-momentum $\mathsf k$ to the detector's frame of reference it can be shown that Eq. (\ref{moving-coher}) is equivalent to Eq. (\ref{static-coher}) with a symmetric coherent amplitude distribution of the form
\begin{equation}\label{moving-distribution}
    a_{\upsilon}(k)=\frac1{s_n}\int a\left(\frac{k}{\gamma(1-\mathbf{\boldsymbol{\upsilon}}\cdot\hat\kk)}\right)\frac{d\hat\kk}{{[\gamma(1-\mathbf{\boldsymbol{\upsilon}}\cdot\hat\kk)]^{n-\frac12}}}.
\end{equation}
From the detector's point of view, the field's coherent amplitude is equivalent to a mixture of Doppler shifted distributions with weight equal to $[s_n\gamma(1-\boldsymbol{\upsilon}\cdot\hat\kk)^{n-\frac12}]^{-1}$. For a similar result regarding the interaction of an inertial detector with a heat bath see \cite{PhysRevD.102.085005}.
\section{Assisted harvesting and catalysis from a coherent field}\label{sec:harvest}
For a coherent state $\ket{a}$ of the field, the coherent amplitude distribution in Eq. (\ref{ampl-distr}) is equal to the eigenvalue of the annihilation operator with mode $\kk$
\begin{equation}
    \hat{a}_{\kk}\ket{a}=a(\kk)\ket{a},
\end{equation}
in this case the amount of harvested coherence to lowest order is given by the expectation value of the field operator $\hat\Phi$
\begin{equation}
    C=2\lambda|{\bra{a}\hat{\Phi}\ket{a}}|.
\end{equation}

The energetic cost associated with harvesting is equal to the energy difference between the final and initial states of the combined system of detector and field
\begin{equation}
    \Delta E=\tr(\hat H_\text{\tiny\ensuremath D}(\rho_\text{\tiny\ensuremath D}-\ketbra{g}))+\tr(\hat H_\phi(\sigma_\phi'-\ketbra{a})).
\end{equation}
To lowest order this splits into two contributions
\begin{equation}
    \Delta E=\Delta E_\text{coh}+\Delta E_\text{vac},
\end{equation}
where
\begin{equation}
    \Delta E_\text{coh}=\frac{C^2}{4}\left(\Omega+4\Re\left[\frac{\bra{a}[\hat\Phi,\hat{H}_\phi]\ket{a}}{\bra{a}\hat\Phi\ket{a}}\right]\right)
\end{equation}
is the cost associated with harvesting and
\begin{equation}
    \Delta E_\text{vac}=\frac{\lambda^2}{2(2\pi)^n}\int\left(1+\frac{\Omega}{\abs{\kk}}\right)\abs{\mathcal{F}_-(\kk)}^2d^n\kk.
\end{equation}
is the cost of interacting with the vacuum \cite{PhysRevD.96.025020}.

Let us consider an inertial detector and a harvesting process in which the switching and smearing functions are respectively given by the following Gaussians 
\begin{equation}\label{gaussian-switching}
    \chi(\tau)=\frac{\text{exp}\left(-\frac{\tau^2}{\pi T^2}\right)}{\pi T}
\end{equation}
\begin{equation}\label{gaussian-smearing}
    f(\boldsymbol\xi)=\frac{\text{exp}\left(-\frac{\boldsymbol\xi^2}{\pi R_n^2}\right)}{(\pi R_n)^n},
\end{equation}
while the state of the field is described by a coherent amplitude distribution with a unit average number of excited quanta of the form
\begin{equation}\label{amplitude}
    a(\kk)=\frac{\text{exp}(-\frac{k^2}{2\pi {E}_n^2}+i\frac{\pi r}{2})}{(\pi {E}_n)^{n/2}},\quad r=0,1
\end{equation}
where
\begin{equation}
    E_n=\frac{s_{n+1}}{\pi s_n}E\quad\text{and}\quad R_n=\frac{s_{n+1}}{\pi s_n}R,
\end{equation}
with $E=\bra{a}\hat H_\phi\ket{a}$ the mean initial energy of the field. Note that even though the support of Eq. (\ref{gaussian-switching}) is no longer compact, as was originally required, the analysis is expected to  present a good approximation to a compact switching function of the form
\begin{equation}
    \chi(\tau)=\begin{cases} 
    {\text{exp}(-\frac{\tau^2}{\pi T^2})}/(\pi T),&|\tau|\leq\mathcal{T}\\
    0,&\text{otherwise}
    \end{cases}
\end{equation}
provided that $\mathcal{T}\geq 4\sqrt{\pi}T$. We will now treat the static and moving cases separately.
\subsection{Static detector}
For $\upsilon=0$ the Fourier transforms of the switching and smearing functions are equal to
\begin{equation}\label{tint}
        X(\Omega\pm k)=\text{exp}\left[{-\frac{\pi(\Omega\pm k)^2T^2}{4}}\right]
\end{equation}
and
\begin{equation}\label{calculated-Fourier}
    F(\mathsf{k})=\text{exp}\left[-\frac{\pi k^2R_n^2}{4}\right]
\end{equation}
respectively. Inserting these into Eq. (\ref{static-coher}) we obtain that the amount of harvested coherence, which now depends on the initial energy of the field, the mean interaction duration and the mean radius of the detector is
\begin{multline}\label{C}
   C(E,T,R)=\frac{4\lambda s_n}{\sqrt{2(2\pi^2{E}_n)^n}}e^{-\frac{\pi\Omega^2T^2}{4}}\\
\times\int_0^\infty k^{n-\frac12}e^{-\mathrm{a}{k^2}}\sinh^{1-r}(bk)\cosh^r(bk)dk,
\end{multline}
with
\begin{equation}
    \mathrm{a}=\frac{1}{2\pi {E}_n^2}\left[1+\frac{\pi^2E_n^2(R^2_n+T^2)}{2}\right], \quad b=\frac{\pi \Omega T^2}{2}.
\end{equation}
The integral on the right hand side is equal to
\begin{figure*}
\begin{minipage}{\textwidth}
\subfloat[$r=1$]{\includegraphics[width=0.45\textwidth]{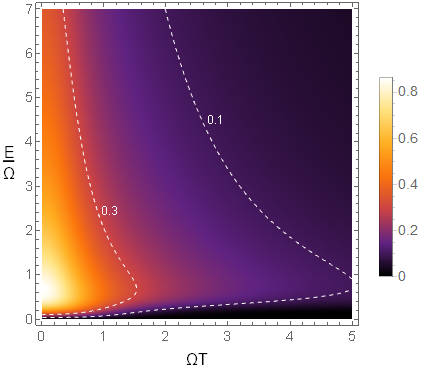}}\hspace{0.5cm}
\subfloat[$r=0$]{\includegraphics[width=0.45\textwidth]{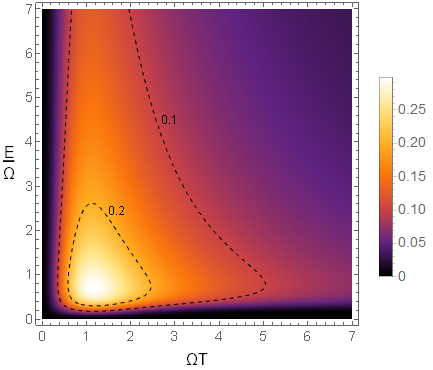}}
\end{minipage}
\caption{Amount of harvested coherence $C/\bar\lambda$ from a coherent scalar field in $1+1$ dimensions and a Gaussian amplitude distribution with phase a) $\phi=\frac{\pi}{2}$ and b) $\phi=0$, as a function of the mean initial energy of the field (in units $\Omega$) and the mean interaction duration (in units $1/\Omega$), for a detector with mean radius $R=1/\Omega$.}
\label{fig:static1}
\begin{minipage}{\textwidth}
\subfloat[$r=1$]{\includegraphics[width=0.45\textwidth]{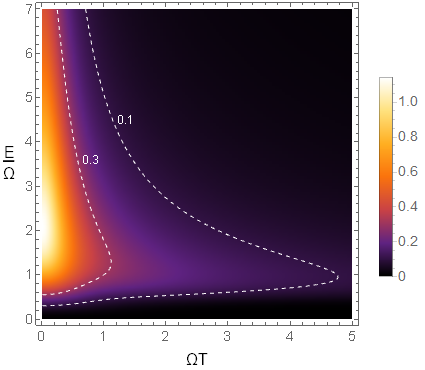}}\hspace{0.5cm}
\subfloat[$r=0$]{\includegraphics[width=0.45\textwidth]{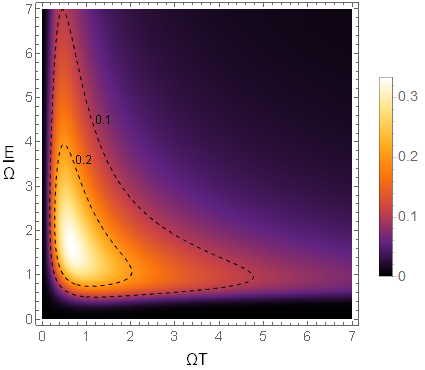}}
\end{minipage}
\caption{Amount of harvested coherence $C/\bar\lambda$ from a coherent scalar field in $3+1$ dimensions and a Gaussian amplitude distribution with phase a) $\phi=\frac{\pi}{2}$ and b) $\phi=0$, as a function of the mean initial energy of the field (in units $\Omega$) and the mean interaction duration (in units $1/\Omega$), for a detector with mean radius $R=1/\Omega$}\label{fig:static3}
\end{figure*}
\begin{widetext}
\begin{equation}\label{cylindrical}
    \int\limits_{0}^\infty k^{n-\frac{1}{2}}e^{-\mathrm{a}k^2}\sinh^{1-r}(bk)\cosh^r(bk)dk=
\frac{\Gamma(n+\frac12)}{2(2\mathrm{a})^{\frac{n}2+\frac14}}e^\frac{b^2}{8\mathrm{a}}\left[D_{-n-\frac12}\left(-\frac{b}{\sqrt{2\mathrm{a}}}\right)-(-1)^{r}D_{-n-\frac12}\left(\frac{b}{\sqrt{2\mathrm{a}}}\right)\right],\quad b>0
\end{equation}
where $D_p(z)$ denotes the \emph{parabolic cylinder function} \cite{gradshteyn2014table}. In a similar way it can be shown that
\begin{equation}
    \Delta E_\text{coh}=\frac{C^2}{4}\left[\Omega-\frac{4(n+\frac12)}{\sqrt{2\mathrm{a}}}\frac{D_{-n-\frac32}\left(-\frac{b}{\sqrt{2\mathrm{a}}}\right)+(-1)^{r}D_{-n-\frac32}\left(\frac{b}{\sqrt{2\mathrm{a}}}\right)}{D_{-n-\frac12}\left(-\frac{b}{\sqrt{2\mathrm{a}}}\right)-(-1)^{r}D_{-n-\frac12}\left(\frac{b}{\sqrt{2\mathrm{a}}}\right)}\right]
\end{equation}
and
\begin{equation}
    \Delta E_\text{vacuum}=\frac{\lambda^2\pi s_n\Gamma(n+1)}{(8\pi^2\mathrm{a}')^\frac{n+1}{2}}e^{-\frac{\pi\Omega^2T^2}{2}+\frac{b^2}{8\mathrm{a}'}}\left[\frac{n+1}{\sqrt{2\mathrm{a}'}}D_{-n-2}\left(\frac{2b}{\sqrt{2\mathrm{a}'}}\right)+\Omega D_{-n-1}\left(\frac{2b}{\sqrt{2\mathrm{a}'}}\right)\right],
\end{equation}
where 
\begin{equation}
    \mathrm{a}'=\frac{\pi(R_n^2+T^2)}{2}.
\end{equation}
\end{widetext}

In Figs. \ref{fig:static1} and \ref{fig:static3} we present the amount of coherence harvested, scaled by the dimensionless coupling constant $\bar\lambda=\lambda\Omega^\frac{n+1}{2}$, as a function of the initial mean energy $E$ of the field (in units $\Omega$) and the interaction duration $T$ (in units $1/\Omega$) for a $1+1$ and a $3+1$ dimensional Mikowski spacetime respectively. In order to simplify the situation we will tacitly assume from now on that the mean radius of the qubit is equal to its transition wavelength $R=1/\Omega$. It is clear from both figures that the harvesting profile depends strongly on the phase of the coherent amplitude distribution. For $r=1$ and for a fixed initial field energy, the maximum amount that can be harvested is obtained through the use of an instantaneous interaction ($T=0$), in agreement with the Theorem of Sec. \ref{Sec:Inert}. When $r=0$ it is impossible to harvest coherence to a qubit interacting instantaneously with the field, in this case the maximum is obtained for interaction durations comparable to the mean radius. In both settings, if the initial energy of the field is zero the amount of coherence harvested vanishes. This is also true for very large energy values. Qualitatively, harvesting is more efficient for field energies comparable to the energy gap. For a resonant energy of the field, $E=\Omega$, it is possible to extend the process to greater interaction times compared to other energies and still extract a small amount of coherence.

Now  with the help of Eq. (\ref{commutator}) of Appendix \ref{appendix-useful}, Eqs.  (\ref{tint})-(\ref{calculated-Fourier}) and Eq. (\ref{cylindrical}) it can be shown that
\begin{multline}
    \lambda^2[\hat\Phi,\hat\Phi^\dagger]=-\frac{2ns_n\bar\lambda^2}{s_{2n}[4\pi\Omega^2(R_n^2+T^2)]^\frac{n+1}{2}}e^{-\frac{\pi\Omega^2T^2(2R^2_n+T^2)}{4(R^2_n+T^2)}}\\\times\left[D_{-n-1}\left(\!-\sqrt\frac{\pi\Omega^2 T^4}{R^2_n+T^2}\right)\!-D_{-n-1}\left(\sqrt\frac{\pi\Omega^2T^4}{R^2_n+T^2}\right)\right].
\end{multline}
From Fig. \ref{fig:com} it can be seen that for $\bar\lambda<<1$ and $R=1/\Omega$ this term is negligible. Since the maximum amount of harvested coherence is of the same order as $\bar\lambda$ then, according to Eq. (\ref{repeat-coh}), we can repeat the process $m$ times for a total of $C_{\text{tot}}=\mathcal{O}(m\bar\lambda)$ units of coherence. Assuming that for a phase-less coherent amplitude distribution obtaining the maximum in each harvest requires a time of approximately $T=1/\Omega$ it follows the total duration is of the order $\mathcal{O}(m/\Omega)$. To extract a single unit of coherence requires therefore approximately $\mathcal{O}(1/\bar\lambda\Omega)$ seconds. For a transition frequency in the optical spectrum and $\bar\lambda=10^{-3}$ this time is of the order of $10^{-12}$ seconds.
\begin{figure}
    \includegraphics[width=\columnwidth]{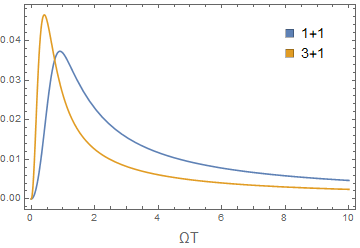}
    \caption{$\lambda^2[\hat\Phi,\hat\Phi^\dagger]/\bar\lambda^2$ as a function of the mean interaction duration (in units $1/\Omega$), for a detector with mean radius $R=1/\Omega$.}
    \label{fig:com}
\end{figure}
\subsection{Detector moving at a constant velocity}
\begin{figure*}\begin{minipage}{\textwidth}
\subfloat[$C_{0.8}/\bar\lambda$ ($r=1$)]{\includegraphics[scale=0.36]{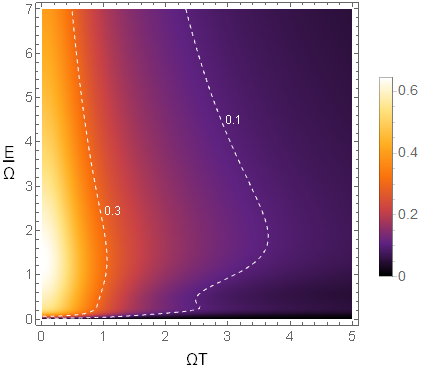}}\hspace{0.45cm}
\subfloat[$C_0/C_{0.8}$ ($r=1$)]{\includegraphics[scale=0.36]{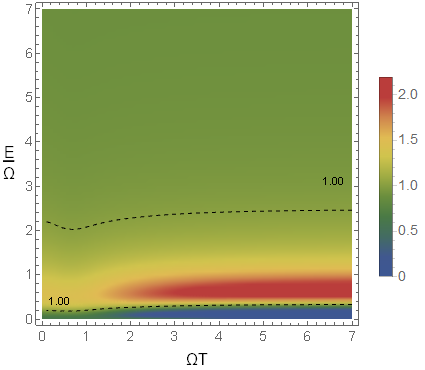}}\hspace{0.45cm}
\subfloat[$E=0.1\Omega$ ($r=1$)]{\includegraphics[scale=0.38]{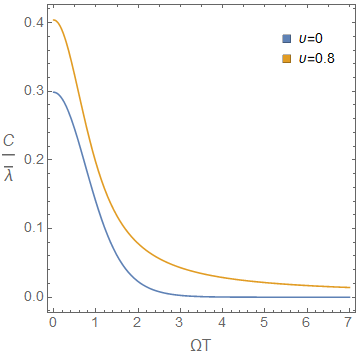}}\\
\subfloat[$C_{0.8}/\bar\lambda$ ($r=0$)]{\includegraphics[scale=0.36]{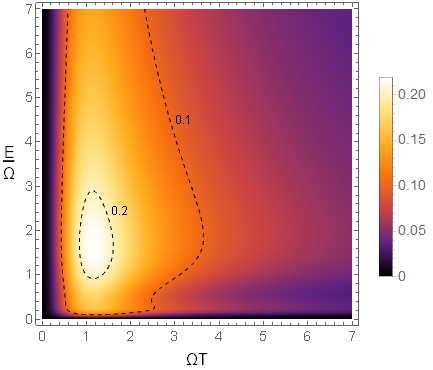}}\hspace{0.42cm}
\subfloat[$C_0/C_{0.8}$ ($r=0$)]{\includegraphics[scale=0.36]{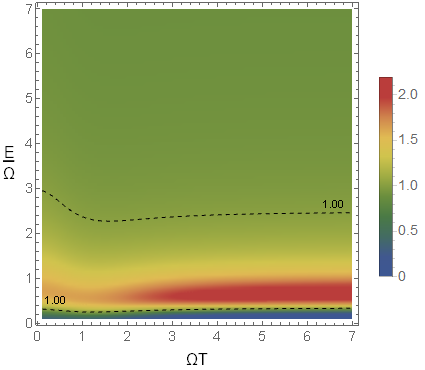}}\hspace{0.45cm}
\subfloat[$E=0.1\Omega$ ($r=0$)]{\includegraphics[scale=0.38]{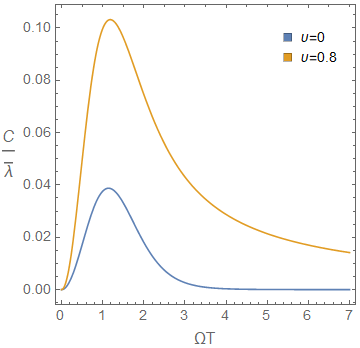}}
\end{minipage}
\caption{Left: Amount of harvested coherence, $C_{0.8}/\bar\lambda$, in $1+1$ dimensions. Center: Amount of swelling $C_0/C_{0.8}$. Right: Comparison between a static and a moving detector for an initial energy of the field $E=0.1\Omega$.}
\label{fig:6}
\begin{minipage}{\textwidth}
\subfloat[$C_{0.8}/\bar\lambda$ ($r=1$)]{\includegraphics[scale=0.36]{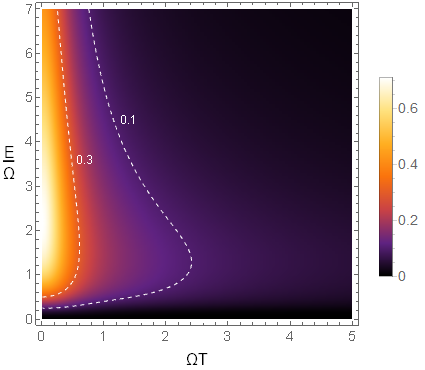}\hspace{0.5cm}}
\subfloat[$C_0/C_{0.8}$ ($r=1$)]{\includegraphics[scale=0.36]{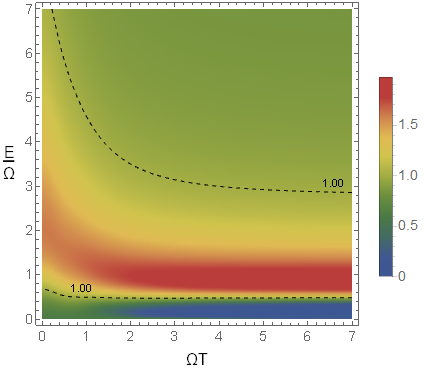}}\hspace{0.45cm}
\subfloat[$E=0.2\Omega$ ($r=1$)]{\includegraphics[scale=0.38]{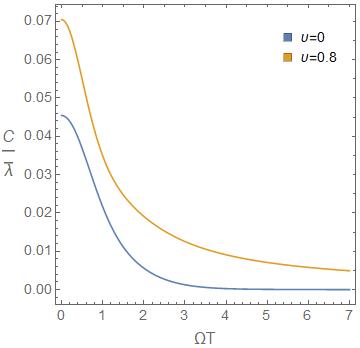}}\\
\subfloat[$C_{0.8}/\bar\lambda$ ($r=0$)]{\includegraphics[scale=0.36]{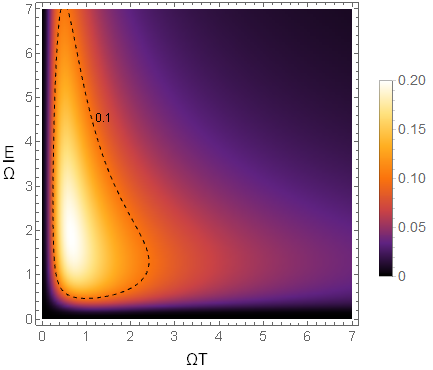}}
 \hspace{0.4cm}
\subfloat[$C_0/C_{0.8}$ ($r=0$)]{\includegraphics[scale=0.36]{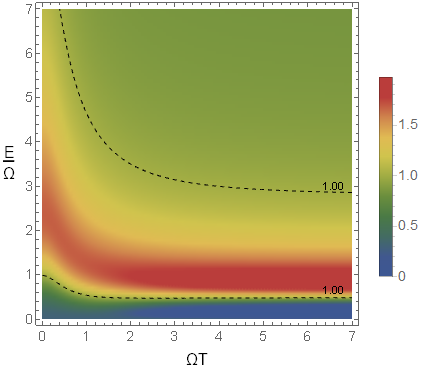}}\hspace{0.45cm}
\subfloat[$E=0.2\Omega$ ($r=0$)]{\includegraphics[scale=0.38]{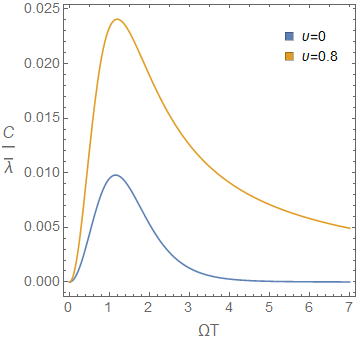}}
\end{minipage}
\caption{Left: Amount of harvested coherence, $C_{0.8}/\bar\lambda$, in $3+1$ dimensions. Center: Amount of swelling $C_0/C_{0.8}$. Right: Comparison between a static and a moving detector for an initial energy of the field $E=0.2\Omega$.}
\label{fig:7}
\end{figure*}

According to Eq. (\ref{moving-distribution}), a detector moving at a constant velocity still perceives the field as a coherent state but in a mixture of static coherent amplitude distributions of the form (\ref{amplitude}) with Doppler shifted energies equal to 
\begin{equation}
    E(\boldsymbol\upsilon)=E\gamma(1-\boldsymbol\upsilon\cdot\hat\kk).
\end{equation}
The amount of harvested coherence in this case is given by
\begin{equation}
    C_\upsilon(E,T,R)=\frac{1}{s_n}\int \frac{C(E(\boldsymbol\upsilon),T,R)}{\gamma(1-\boldsymbol\upsilon\cdot\hat\kk)^\frac{n-1}{2 }}d\hat{\kk}.
\end{equation}
In Figs. \ref{fig:6} and \ref{fig:7}  we numerically evaluate this amount for a detector moving at a constant relativistic speed of $\upsilon=0.8$, in $1+1$ and $3+1$ dimensions respectively. We observe that close to resonance the amount of coherence harvested decreases with an increasing value of the detector's speed. As in \cite{KBM}, for lower and higher initial energies of the field there exist ``swelling" regions, where it is possible to extract more coherence to a moving than to a static detector. However, this effect becomes less intense for a higher spacetime dimension.

\subsection{Assisted catalysis}
For an instantaneous interaction coherence harvesting is catalytic. Despite the fact that after each harvest the state of the field has changed, it is possible to extract the same amount of coherence to a sequence of detectors. Ignoring the trivial case of $r=0$, for a coherent amplitude distribution with phase $\phi=\frac{\pi}{2}$ each detector will harvest
\begin{equation}
    C_\upsilon(E)=\frac{2\lambda\Gamma(3/4) }{(2\pi)^\frac14}\left[\frac{E_+}{\left(1+\frac{\pi^2E_+^2}{\Omega^2}\right)^\frac34}+\frac{E_-}{\left(1+\frac{\pi^2E_-^2}{\Omega^2}\right)^\frac34}\right]
\end{equation}
units of coherence in $1+1$ and
\begin{multline}
    C_\upsilon(E)=\\\frac{16\bar\lambda \Gamma(3/4)}{(2\pi^9)^\frac14\gamma\upsilon}
    \left[\left(1+\frac{\pi^2E_-^2}{32\Omega^2}\right)^{-\frac{3}{4}}-\left(1+\frac{\pi^2E_+^2}{32\Omega^2}\right)^{-\frac{3}{4}}\right]
\end{multline}
in $3+1$ dimensions, where $E_\pm=E\gamma(1\pm\upsilon)$ denote the field's relativistic Doppler shifted energies. As has already been mentioned in Sec. \ref{sec:harvesting}, catalysis is an energy consuming process. The cost of each extraction to lowest order in this case is equal to
\begin{equation}\label{Enikost}
    \Delta E=\begin{cases}
    \frac{C_\upsilon^2(E)\Omega}{4}+\frac{\bar\lambda^2\Omega}{\pi^2}(1+\frac{\gamma}{\sqrt{2}}),&n=1\\
    &\\
    \frac{C_\upsilon^2(E)\Omega}{4}+\frac{8\bar\lambda^2\Omega}{\pi^4}\left(1+\frac{3\gamma}{\sqrt{2}}\right),&n=3.
    \end{cases}
\end{equation}
In Fig. \ref{fig:Ecost} we plot the
amount of coherence harvested through catalysis along with its energy cost (in units $\Omega$) as a function of the initial energy of the field.
For field energies close to resonance the amount obtained is maximized. Once again it can be seen that this amount decreases for an increasing value of the detector's speed. This is also true for the energy cost associated with harvesting. On the other hand, the cost associated with the vacuum remains relatively constant.
\section{Conclusions}\label{conclus}
\begin{figure*}
\subfloat[$\upsilon=0$]{\includegraphics[width=0.32\textwidth]{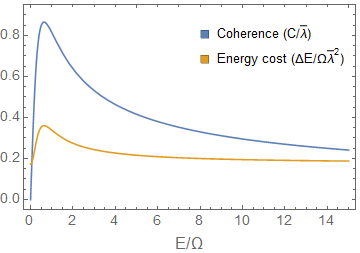}}\hspace{0.2cm}
\subfloat[$\upsilon=0.6$]{\includegraphics[width=0.32\textwidth]{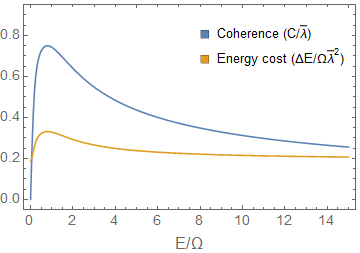}}\hspace{0.2cm}
\subfloat[$\upsilon=0.8$]{\includegraphics[width=0.32\textwidth]{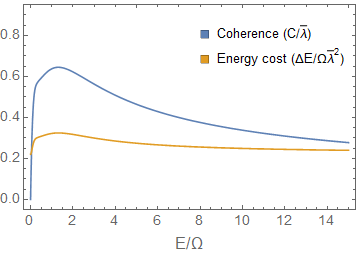}}\\
\subfloat[$\upsilon=0$]{\includegraphics[width=0.32\textwidth]{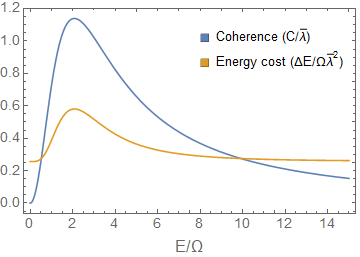}}\hspace{0.2cm}
\subfloat[$\upsilon=0.6$]{\includegraphics[width=0.32\textwidth]{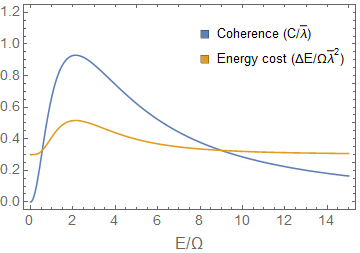}}\hspace{0.2cm}
\subfloat[$\upsilon=0.8$]{\includegraphics[width=0.32\textwidth]{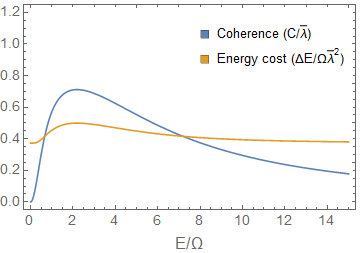}}
\caption{Amount of harvested coherence $C_\upsilon/\bar\lambda$ and cost in energy $\Delta E/\Omega\bar\lambda^2$ as a function of the initial energy of the field (in units $\Omega$) for various detector speeds. Upper: 1+1 dimensions. Lower: 3+1 dimensions.}
\label{fig:Ecost}
\end{figure*}
We have thoroughly investigated the conditions under which an UDW detector, coupled to a massless scalar field through a derivative coupling, succeeds in harvesting quantum coherence. It was proven that for an instantaneous interaction between detector and field, harvesting is catalytic, i.e., the same amount can be repeatedly extracted. For a suitable choice of the field's coherent amplitude distribution and an inertial detector, when the Fourier transform of the smearing function is positive this is also the maximum amount that can be obtained. By considering as an example a harvesting protocol in which the switching, smearing and coherent amplitude functions are Gaussian, it was demonstrated that for a coherent state of the field the process depends on the phase of the amplitude, the mean initial field energy, the mean interaction duration and the mean radius of the detector. We observed that, for a resonant energy of the field, it is possible to extend the process to longer interaction durations. It was also shown that the total time required to harvest, through repeated applications of the protocol, a single unit of coherence to a sequence of detectors is very short.

For a detector moving at a constant velocity and for a mean radius equal to the inverse of its transition frequency we verify the presence of swelling affects as was reported in \cite{KBM}. Nonetheless, since energy non-conserving interactions such as the one considered here are coherence generating \cite{PhysRevA.92.032331,BU20171670}, it is possible that this increase is due to the interaction. To avoid this possibility and in order to be able to determine how different parameters which are intrinsic to the combined system of qubit and field affect harvesting, we will study, in future work , protocols under energy conserving interactions such as the one given by the Glauber photodetection model \cite{PhysRev.130.2529,PhysRevD.46.5267} for example.
\acknowledgments{
The authors wish to thank Lena S. Peurin for fruitful discussions during preparation of this manuscript. D. M.'s research is co-financed by Greece and the  European Union (European Social Fund-ESF) through the Operational Programme ``Human Resources Development, Education and Lifelong Learning" in the context of the project ``Reinforcement of Postdoctoral Researchers - 2nd Cycle" (MIS-5033021), implemented by the State Scholarships Foundation (IKY).}

\appendix
\section{Useful relations}\label{appendix-useful}
Let
\begin{equation}
    \mathcal{F_\pm}(\kk)=\int\limits_{-\infty}^{+\infty}\chi(\tau)e^{\pm i\Omega\tau}\partial_\tau \left(F(\mathsf k,\tau)e^{i\mathsf k\cdot\mathsf x(\tau)}\right)d\tau.
\end{equation}
Taking advantage of the commutation relations between the creation and annihilation operators in Eq. (\ref{commutation}) and rewriting $\hat\Phi$ as
\begin{equation}
    \hat\Phi=\int\frac{d^n\kk}{\sqrt{(2\pi)^n2\abs{\kk}}}\left(\mathcal{F}_-(\kk)\hat a_\kk+\mathcal{F}_+^*(\kk)\hat a_\kk^\dagger\right)
\end{equation}
we can easily compute the following commutators
\begin{equation}\label{commutator}
    [\hat\Phi,\hat\Phi^\dagger]=\int\frac{d^n\kk}{{(2\pi)^n2\abs{\kk}}}\left(\abs{\mathcal{F}_-(\kk)}^2-\abs{\mathcal{F}_+(\kk)}^2\right)
\end{equation}
\begin{equation}
    [\hat\Phi,\hat H_\phi]=\int\frac{\abs{\kk}d^n\kk}{\sqrt{(2\pi)^n2\abs{\kk}}}\left(\mathcal{F}_-(\kk)\hat a_\kk-\mathcal{F}_+^*(\kk)\hat a_\kk^\dagger\right)
\end{equation}
\begin{equation}\label{comm}
    \left[[\hat\Phi,\hat H_\phi],(\hat\Phi^\dagger)^m\right]=mc^2(\hat\Phi^\dagger)^{m-1}
\end{equation}
where
\begin{equation}\label{se}
    c^2=\frac{1}{{2(2\pi)^n}}\int\left(\abs{\mathcal{F}_-(\kk)}^2+\abs{\mathcal{F}_+(\kk)}^2\right)d^n\kk.
\end{equation}
\section{Assisted catalysis for instantaneous interactions}\label{second-appendix}
For $\chi(\tau)=\delta(\tau)$ it is easy to see from Eq. (\ref{Phi}) that $\hat\Phi=\hat\Phi^\dagger$. The unitary evolution operator in Eq. (\ref{evolution}) can then be written as \cite{PhysRevD.96.065008}
\begin{equation}
\hat{U}=I\otimes \cos^{}(\lambda\hat\Phi)-i\sigma_x\otimes\sin^{}(\lambda\hat\Phi)
\end{equation}
where $\sigma_x=\ketbra{e}{g}+\ketbra{g}{e}$. Evolving the separable state of the combined system of detector and field in Eq. (\ref{separable}) and tracing out the field degrees of freedom we find that the state of the detector after the interaction is equal to
\begin{equation}\label{app:det-state}
    \rho_\text{\tiny\ensuremath D}=\left(\begin{array}{cc}
         \tr(\cos^2(\lambda\hat\Phi)\sigma_\phi)& \frac i2\tr(\sin^{}(2\lambda\hat\Phi)\sigma_\phi) \\
         -\frac i2\tr(\sin^{}(2\lambda\hat\Phi)\sigma_\phi)& \tr(\sin^2(\lambda\hat\Phi)\sigma_\phi)
    \end{array}\right).
\end{equation}
Similarly the state of the field after harvest is given by
\begin{equation}\label{app:field-state}
    \sigma'_\phi=\cos^{}(\lambda\hat\Phi)\sigma_\phi\cos^{}(\lambda\hat\Phi)+\sin^{}(\lambda\hat\Phi)\sigma_\phi\sin^{}(\lambda\hat\Phi).
\end{equation}
From Eqs. (\ref{app:det-state}) and (\ref{app:field-state}) and the definition of the $\ell_1$-norm of coherence it can be seen that the amount of harvested coherence extracted the second time is equal to
\begin{align}
    C'&=\abs{\tr(\sin^{}(2\lambda\hat\Phi)\sigma_\phi')}\nonumber\\
    &=\abs{\tr(\sin^{}(2\lambda\hat\Phi)\sigma_\phi)}
\end{align}
where in the last equality we have taken advantage of the cyclic property of the trace and the fact that $\cos^2(\lambda\hat\Phi)+\sin^2(\lambda\hat\Phi)=I_\phi$. 

We will now compute the energy difference $\Delta E$ between the initial and final states of the combined system of field plus detector and show that it is always positive. This means that catalysis is an energy consuming process so it cannot be repeated indefinitely.

From Eqs. (\ref{app:det-state}) and (\ref{app:field-state}) it is easy to see that the difference in energy before and after extraction is
\begin{align}
    \Delta E&=\tr(\hat H_\text{\tiny\ensuremath D}(\rho_\text{\tiny\ensuremath D}-\ketbra{g}))+\tr(\hat H_\phi(\sigma_\phi'-\sigma_\phi))\nonumber\\
    &=\Omega\tr(\sin^2(\lambda\hat\Phi)\sigma_\phi)\nonumber\\
    &+\frac12\tr(\left[[\cos^{}(\lambda\hat\Phi),\hat H_\phi],\cos^{}(\lambda\hat\Phi)\right]\sigma_\phi)\nonumber\\
    &+\frac12\tr(\left[[\sin^{}(\lambda\hat\Phi),\hat H_\phi],\sin^{}(\lambda\hat\Phi)\right]\sigma_\phi).
\end{align}
The first term on the right hand side as a product of two positive matrices is evidently positive, indeed this must be the case since the qubit started out in its ground state and can only gain energy. On the other hand from Eq. (\ref{comm}) it can be shown by iteration that
\begin{equation}
    \left[[\hat\Phi^\ell,\hat H_\phi],\hat\Phi^m\right]=\ell mc^2\hat\Phi^{\ell+m-2}.
\end{equation}
This means that
\begin{equation}
    \left[[\cos^{}(\lambda\hat\Phi),\hat H_\phi],\cos^{}(\lambda\hat\Phi)\right]=c^2\lambda^2\sin^2(\lambda\hat\Phi)
\end{equation}
and
\begin{equation}
    \left[[\sin^{}(\lambda\hat\Phi),\hat H_\phi],\sin^{}(\lambda\hat\Phi)\right]=c^2\lambda^2\cos^2(\lambda\hat\Phi)
\end{equation}
so finally
\begin{equation}
    \Delta E=\Omega\tr(\sin^2(\lambda\hat\Phi)\sigma_\phi)+\frac{c^2\lambda^2}{2}
\end{equation}
which is always positive.
\bibliography{harvesting}
\end{document}